\newtheorem{lem}{Lemma}[section]
\newtheorem{lemma}[lem]{Lemma}
\newtheorem{thm}[lem]{Theorem}
\newcommand{\N}{\mathbb{N}}
\newcommand{\E}{\mathbb{E}}
\renewcommand{\P}{\mathbb{P}}
\renewcommand{\Pr}{\mathbb{P}}
\title{Optimal broadcasting in networks with faulty nodes}
\author{Yoel Grinshpon \and Ori Gurel-Gurevich}
\begin{document}
\maketitle

\begin{abstract}
Large computer networks are an essential part of modern technology, and quite often information needs to be broadcast to all the computers in the network. If all computers work perfectly all the time, this is simple. Suppose, however, that some of the computers fail occasionally. What is the fastest way to ensure that with high probability all working computers get the information?

In this paper, we analyze three algorithms to do so. All algorithms terminate in logarithmic time, assuming computers fail with probability $1-p$ independently of each other. We prove that the third algorithm, which runs in time $(1+o(1))(\frac{\log N}{\log(1+p)})$, is asymptotically optimal.
\end{abstract}
\smallskip
\noindent \textbf{Keywords.} Rumor spreading, Randomized broadcasting, fault tolerant broadcasting.

\section{Introduction and results}

Suppose we have a network with $N$ nodes (which stand for computers), each of them \emph{active} independently with probability $p$. Suppose also that one of these nodes has a message that needs to be conveyed to all active nodes. Each node can send one message  per time unit to any other node. Once an active node receives the message, it too can send the message to other nodes. The question we are asking is what method of spreading the message will minimize the time we have to wait until all active nodes have received the message, with high probability (that is, the probability goes to 1 as $N$ tends to infinity).

First, we analyze the \emph{naive algorithm}. In this algorithm, each informed node sends a message at every time unit to a node chosen randomly with uniform distribution.
\begin{thm} \label{naive}
For the naive algorithm, with high probability, all active nodes will receive the message after $\left(1+o(1)\right)\left(\frac{1}{\log(1+p)}+\frac{1}{p}\right)\log N$ time units.
\end{thm}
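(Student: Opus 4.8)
The plan is to analyze the naive algorithm in two phases. In the \emph{growth phase} the number of informed nodes multiplies by roughly $1+p$ each round, and this phase ends once all but a vanishing fraction of the active nodes are informed; in the \emph{cleanup phase} the remaining uninformed active nodes are picked off, each failing to be reached in a given round with probability about $e^{-p}$. Before either phase I would use a Chernoff bound to condition on the high-probability event that the number of active nodes is $(1+o(1))pN$ (throughout, $p\in(0,1)$ is treated as a fixed constant), and fix a sequence $\varepsilon_N\to 0$ slowly, say $\varepsilon_N=1/\sqrt{\log N}$.

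For the growth phase, let $I_t$ be the set of informed nodes after $t$ rounds; it consists of active nodes, contains the source, and $|I_t|$ is non-decreasing. Each of the $|I_t|$ messages sent in round $t$ lands on a uniform node, which is a previously uninformed active node with probability $p-|I_t|/N+o(1)$; hence, as long as $|I_t|\le\varepsilon_N N$, one gets $\E[|I_{t+1}|\mid\mathcal F_t]\ge(1+p)(1-o(1))|I_t|$ (with $\mathcal F_t$ the natural filtration), and a standard martingale concentration argument — the multiplicative fluctuations are small because $|I_t|$ is large and grows geometrically — shows that $\log|I_t|$ stays within $o(\log N)$ of $t\log(1+p)$. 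The initial stretch, from $|I_0|=1$ until $|I_t|$ first exceeds $\log N$, is a supercritical branching-type process and lasts $o(\log N)$ rounds with high probability, so $|I_t|$ reaches $\varepsilon_N N$ within $(1+o(1))\tfrac{\log N}{\log(1+p)}$ rounds. A further $O(\log(1/\varepsilon_N))=o(\log N)$ rounds then drive the number of uninformed active nodes below $\varepsilon_N N$, since once $|I_t|=\Theta(N)$ that number contracts by a constant factor per round (Chernoff again).

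For the cleanup phase, from the first time $t_2$ at which at most $\varepsilon_N N$ active nodes remain uninformed we have $|I_t|\ge(p-o(1))N$ for all $t\ge t_2$, as $I_t$ only grows. A still-uninformed active node $v$ therefore survives a round with probability $(1-1/N)^{|I_t|}\le e^{-(p-o(1))}$, so after $T:=(1+\eta)\tfrac{\log N}{p}$ further rounds it is still uninformed with probability at most $e^{-(p-o(1))T}=N^{-(1+\eta)(1-o(1))}$; a union bound over the at most $N$ relevant nodes shows that with high probability every active node is informed within these $T$ extra rounds. Since $\eta>0$ is arbitrary, the cleanup phase costs $(1+o(1))\tfrac{\log N}{p}$ rounds, and adding the two phases gives the claimed bound $(1+o(1))\bigl(\tfrac{1}{\log(1+p)}+\tfrac1p\bigr)\log N$.

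The real obstacle is the growth phase: one must pin the multiplicative rate at exactly $1+p$ — not merely at ``some constant larger than $1$'' — all the way up to a $(1-o(1))$ fraction of the active population, which means controlling simultaneously the erratic regime of tiny $|I_t|$, the collision losses once $|I_t|$ is a nonnegligible fraction of $N$, and the accumulation of martingale errors over $\Theta(\log N)$ rounds, while keeping every correction $o(1)$. The cleanup phase is by contrast a routine coupon-collector union bound. Finally, the matching lower bound (so that the constant is genuinely $1+o(1)$) can be obtained by a second-moment argument: a uniformly random active node is still uninformed at time $(1-o(1))\bigl(\tfrac{1}{\log(1+p)}+\tfrac1p\bigr)\log N$ with probability at least $N^{-1+\delta}$ for some fixed $\delta>0$, and conditioning on one active node being uninformed changes another one's survival probability by only a factor $1+o(1)$ (it affects only $O(\log N)$ of the order-$N\log N$ messages), so the number $U_T$ of still-uninformed active nodes satisfies $\E[U_T^2]\le(1+o(1))\E[U_T]^2$, whence $U_T>0$ with high probability.
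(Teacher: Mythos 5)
Your decomposition matches the paper's: a growth phase contributing $\tfrac{\log N}{\log(1+p)}$, a short bridge from an $\varepsilon_N$-fraction to a $(1-\varepsilon_N)$-fraction of the active nodes, and a coupon-collector cleanup contributing $\tfrac{\log N}{p}$, with the cleanup upper bound done exactly as in the paper (first moment over the surviving nodes). Where you genuinely diverge is in the two hardest sub-steps. For the growth phase the paper does not do round-by-round concentration of $\log|I_t|$; it couples $k_t$ from below with a Galton--Watson process having offspring $2$ with probability $p(1-\epsilon)$ and $1$ otherwise, and controls that process all the way from $1$ up to $\epsilon pN$ in one stroke via the martingale $W_q=Z_q/(1+p')^q$, the Kesten--Stigum theorem ($\Pr(W=0)=0$) and optional stopping. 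Your split at level $\log N$ also works: above that level your Azuma-type control of the multiplicative errors goes through (with per-round tolerance $(\log N)^{-1/4}$, say, the accumulated loss in the exponent is $o(\log N)$ and the failure probabilities sum to $o(1)$). But below that level you simply assert that the supercritical stretch from $1$ to $\log N$ lasts $o(\log N)$ rounds with high probability. The assertion is true, and even has an elementary proof (the expected time to go from $k$ to $2k$ informed nodes is $O(1/p)$ by optional stopping on $|I_t|-p'kt$, so the expected time to reach $\log N$ is $O(p^{-1}\log\log N)$ and Markov finishes) --- but it is exactly the step the paper's main lemma is built to handle, and it cannot be left as a one-liner; you correctly flag the ``erratic regime of tiny $|I_t|$'' as the obstacle but do not actually resolve it.

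On the matching lower bound (which the paper does prove, combining a branching-process upper bound on $k_t$ with a second-moment argument for the last stage), your justification of approximate pairwise independence is the one place I would call a genuine gap. Conditioning on node $i$ being uninformed at time $T$ does not merely ``redirect $O(\log N)$ of the $N\log N$ messages'': it conditions every one of the $\Theta(N\log N)$ messages to avoid $i$, and, because the process is adaptive, it perturbs the entire trajectory of the informed set (had $i$ been hit, it would itself have generated further messages). The paper sidesteps this by passing to a dominating non-adaptive model in which every active node, informed or not, transmits; there the survival indicators are provably negatively correlated, $\Pr(Y_{j,k}=1\mid Y_{i,k}=1)\le\Pr(Y_{j,k}=1)$, which yields $Var(Y_k)\le\E[Y_k]$ and lets Chebyshev finish. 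You would need either that trick or a genuinely careful coupling to turn your $\E[U_T^2]\le(1+o(1))\E[U_T]^2$ claim into a proof.
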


Second, we analyze the \emph{cyclic algorithm}, which was suggested by Amnon Barak \cite{AB}. This algorithm begins like the naive algorithm, with messages sent at random, for $\frac{(1+o(1))\log N}{\log(1+p)}$ time units. At this point, we can be sure that most active nodes have received the message, and each node starts sending messages to the nodes next to him in cyclic order. More precisely, if we denote the nodes by $\{1,...,N\}$ then node $i$ sends a message to node $i+1,i+2,\ldots$ etc. modulo $N$.
\begin{thm} \label{cyclic}
For the cyclic algorithm, with high probability, all active nodes will receive the message after $(1+o(1))(\frac{1}{\log(1+p)}+\frac{1}{-\log (1-p)}) \log N$ time units.
\end{thm}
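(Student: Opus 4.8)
The plan is to bound the two phases of the algorithm separately and add the estimates.

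\textbf{Phase 1.} I would fix the length of the initial random-sending phase to a concrete value $T_1=\lceil\log_{1+p}N\rceil+\lceil 12\log\log N/p\rceil=(1+o(1))\tfrac{\log N}{\log(1+p)}$, and show that after $T_1$ rounds the set $U$ of still-uninformed active nodes has $|U|\le N/(\log N)^{2}$ \whp. This is the first half of the analysis behind Theorem~\ref{naive}: while the informed set is $o(N)$ it multiplies by $(1+o(1))(1+p)$ per round (each informed node hits a fresh active node with probability $(1+o(1))p$), so after $(1+o(1))\log_{1+p}N$ rounds it reaches a $(1-o(1))$-fraction of the $(1+o(1))pN$ active nodes, and from then on the number of uninformed active nodes contracts by a factor $(1+o(1))e^{-p}<1$ each round, so the extra $\lceil 12\log\log N/p\rceil=o(\log N)$ rounds drive $\E|U|$ below $N/(\log N)^{3}$, whence $|U|\le N/(\log N)^{2}$ \whp\ by Markov. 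Writing $|A|$ for the number of active nodes ($|A|\ge pN/2$ \whp), put $\delta=\delta(N):=3/(p(\log N)^{2})\to 0$; then $|U|\le\delta|A|$ \whp.

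\textbf{Phase 2.} Once informed, node $i$ sends on consecutive rounds to $i+1,i+2,\dots$, so a node informed by the end of phase~1 has, after $t$ further rounds, reached its next $t$ successors; hence an active node $v\in U$ is informed exactly $d(v)$ rounds into phase~2, where $d(v)$ is the cyclic distance from $v$ back to the nearest node informed by the end of phase~1 (relaying through nodes informed during phase~2 gives no speed-up, by additivity of distances along an arc). So phase~2 lasts $\max_{v\in U}d(v)$ rounds, which is at most the length of the longest block of consecutive nodes that are \emph{not} informed by the end of phase~1 --- i.e.\ of nodes that are either inactive or in $U$. It therefore suffices to show that, for $w:=(1+\varepsilon)\tfrac{\log N}{-\log(1-p)}$ with $\varepsilon=\varepsilon(N)=1/\log\log N$, \whp\ every window of $w$ consecutive nodes contains a node informed during phase~1.

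\textbf{Counting bad windows.} The naive process is invariant under relabelling of the non-source nodes, so conditionally on the active set $A$ and on $|U|=u$, the set $U$ is a uniformly random $u$-subset of $A\setminus\{\text{source}\}$. For a fixed window $W$ not containing the source (the other case is trivial, as the source is informed), on the event $\{|U|\le\delta|A|\}$ we get $\Pr[W\cap A\subseteq U\mid A,\,|U|=u]\le\big(u/(|A|-1)\big)^{|W\cap A|}\le(2\delta)^{|W\cap A|}$, and since $|W\cap A|\sim\mathrm{Bin}(w,p)$, averaging over $A$ gives $\Pr[W\text{ bad},\,|U|\le\delta|A|]\le(1-p+2p\delta)^{w}=(1-p)^{w}\big(1+\tfrac{2p\delta}{1-p}\big)^{w}=N^{-(1+\varepsilon)+O(\delta)}$. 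Since $\delta=o(\varepsilon)$, a union bound over the $N$ windows together with $\Pr[|U|>\delta|A|]=o(1)$ shows that \whp\ no bad window exists, so phase~2 finishes within $w$ rounds. Adding the two phases, the total time is at most $T_1+w=(1+o(1))\big(\tfrac1{\log(1+p)}+\tfrac1{-\log(1-p)}\big)\log N$ \whp, as required.

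\textbf{What is delicate.} The only non-routine ingredient is the phase~1 estimate: rigorously showing that $\Theta(\log_{1+p}N)$ rounds of the naive process bring the informed set up to a $(1-o(1))$-fraction of the active nodes with the correct leading constant, and then shrink $|U|$ geometrically. I would import this from the proof of Theorem~\ref{naive}, using only its crude one-sided form ``$|U|$ is small \whp'' (so no sharp concentration is needed). The phase-2 run-length description, the exchangeability reduction, and the union bound are all routine; and the same run-length picture, with the same exchangeability tool, yields a matching lower bound showing the estimate is sharp (not needed for the statement as phrased).
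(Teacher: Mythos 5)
Your proposal is correct and follows essentially the same route as the paper: a random first phase analysed via the naive-algorithm machinery, followed by bounding the cyclic phase by the longest run of consecutive nodes uninformed at the end of phase~1, using exchangeability of the uninformed set and a union bound over windows of length $(1+o(1))\frac{\log N}{-\log(1-p)}$. The only difference is bookkeeping: you run phase~1 an extra $O(\log\log N)$ rounds to make the uninformed fraction explicitly $o(1)$, where the paper instead works with a fixed $\epsilon$ and lets $\epsilon,\delta\to 0$ slowly at the end.
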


Third, we introduce the \emph{improved cyclic algorithm}. This algorithm, like the cyclic algorithm, also begins with messages sent at random for $\frac{(1+o(1))\log N}{\log(1+p)}$ time units. Then, we divide the network into disjoint segments, each consisting of $\ell=\sqrt {\log N}$ consecutive nodes (in the cyclic order). A segment is \emph{good} if it has at least $\frac{\ell p}{2}$ active nodes, at least one of which is informed, and \emph{bad} otherwise. Now, each node informs all nodes in its segment. If a segment is bad, then the nodes in it stop transmitting. If a segment is good, then all active nodes in the segment (which are now informed) begin to inform the nodes in the next segments (in the cyclic order) and then the segment after it and so on. This takes $\lceil 2/p \rceil$ time units per segment, and, since bad segments are rare, we only need to do this for a short time, so this part of the algorithm takes only $o(\log N)$ time units.

\begin{thm} \label{improved}
For the improved cyclic algorithm, with high probability, all active nodes will receive the message after $(1+o(1))(\frac{\log N}{\log(1+p)})$ time units.
\end{thm}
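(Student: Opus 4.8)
The plan is to account for the three phases separately and show that the two phases after the initial random-sending one cost only $o(\log N)$ rounds in total. Write $\ell=\sqrt{\log N}$, let there be $M=N/\ell$ segments, and let $T_1=\frac{(1+o(1))\log N}{\log(1+p)}$ be the length of the random-sending phase, with the $o(1)$ chosen generously enough that this phase outlasts the branching-growth regime. The one ingredient I would borrow from the analysis of the naive algorithm (the proof of Theorem \ref{naive}) is the growth estimate: as long as the informed set $\mathcal{I}_t$ after $t$ rounds has size $o(N)$ it multiplies by $(1+o(1))(1+p)$ each round, so after $(1+o(1))\frac{\log N}{\log(1+p)}$ rounds it already has linear size. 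In particular there is a constant $\delta>0$ with $|\mathcal{I}_{T_1-1}|\ge\delta N$ \whp; I would condition on $\mathcal{I}_{T_1-1}$ and on this event henceforth.

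The crux is to bound the longest run of consecutive bad segments. A segment $\sigma$ can be bad only if (a) it contains fewer than $\ell p/2$ active nodes, or (b) it contains at least $\ell p/2$ active nodes but receives no message during round $T_1$ (had it received one it would contain an informed node, and then it would be good). Call these events $A_\sigma$ and $B_\sigma$, so ``$\sigma$ bad'' $\subseteq A_\sigma\cup B_\sigma$. A Chernoff bound gives $\Pr[A_\sigma\mid\mathcal{I}_{T_1-1}]\le e^{-c\ell}$ for a constant $c>0$, uniformly in the conditioning, and the $A_\sigma$ are conditionally independent across the disjoint segments. Given $\mathcal{I}_{T_1-1}$, the round-$T_1$ destinations are fresh independent uniform picks, so for any collection of $r$ segments the probability that none of them receives a message is $(1-r\ell/N)^{|\mathcal{I}_{T_1-1}|}\le e^{-\delta r\ell}$; moreover the $B_\sigma$ depend only on the round-$T_1$ randomness, hence are independent of the activity pattern and of the $A_\sigma$ once $\mathcal{I}_{T_1-1}$ is fixed. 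Writing the event ``$k$ consecutive segments are all bad'' as the union over $S\subseteq[k]$ of the events ``exactly the segments in $S$ fail (a)'', the conditional probability is at most $\sum_{S\subseteq[k]}e^{-c\ell|S|}e^{-\delta\ell(k-|S|)}\le 2^{k}e^{-\min(c,\delta)\ell k}$. A union bound over the $\le N$ starting positions then shows that \whp\ no run of consecutive bad segments is longer than $K:=\frac{2\log N}{\min(c,\delta)\ell}=\Theta(\sqrt{\log N})$.

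Assembling the timing is routine. In the within-segment phase every segment containing an informed node has all of its $\ell$ nodes informed within $\ell=\sqrt{\log N}$ rounds, so afterwards every good segment has its $\ge\ell p/2$ active nodes informed. In the last phase each good segment uses those $\ge\ell p/2$ informed nodes to inform the $\ell$ nodes of the next segment in $\lceil 2/p\rceil$ rounds, then the segment after it, and so on; thus after $j\lceil 2/p\rceil$ rounds every segment at cyclic distance at most $j$ past some good segment is completely informed. Since \whp\ there is a good segment and no run of more than $K$ consecutive bad ones, every segment --- and hence every active node, because this phase informs all nodes of every segment it reaches --- is informed within $K\lceil 2/p\rceil=O(\sqrt{\log N})$ rounds of the last phase. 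Summing, the total is $T_1+\ell+K\lceil 2/p\rceil=\frac{(1+o(1))\log N}{\log(1+p)}+O(\sqrt{\log N})=(1+o(1))\frac{\log N}{\log(1+p)}$, since $\log(1+p)$ is a positive constant.

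The step I expect to be the real obstacle is the run estimate: a positive fraction of the $N/\sqrt{\log N}$ segments genuinely are bad, so a naive union bound over bad segments is hopeless, and one must show instead that nearby segments are unlikely to be bad simultaneously --- which requires some independence between their defining events. The two-stage device above supplies it cheaply: expose the process only through round $T_1-1$, note that $\Omega(N)$ nodes are informed, and then use the last round's fresh randomness, which is independent of the activity of every still-uninformed node. (One could instead try to prove a negative-association property of the ``uninformed'' indicators after $T_1$ rounds and argue directly, but the conditioning seems the least painful route.)
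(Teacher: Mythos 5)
Your overall strategy --- pay only $o(\log N)$ for everything after the random-sending phase by bounding the longest run of consecutive bad segments --- is the right one, and your route to it is genuinely different from the paper's. The paper avoids conditioning entirely: it notes that the set of informed nodes at the end of the first phase is exchangeable, couples it monotonically with an ``independent model'' in which each node is informed independently with probability $(1-2\epsilon)p$, and observes that the second phase is monotone; in the independent model the segments are literally i.i.d.\ and the run-length bound is immediate. You instead condition on $\mathcal{I}_{T_1-1}$ and claim that (i) $\Pr[A_\sigma\mid\mathcal{I}_{T_1-1}]\le e^{-c\ell}$ uniformly in the conditioning and (ii) the $A_\sigma$ are conditionally independent given $\mathcal{I}_{T_1-1}$. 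Both claims are false: in the naive phase, a node that is targeted by an informed sender and nevertheless stays uninformed is necessarily inactive, so conditioning on the informed set biases every uninformed node toward inactivity, correlates the activity indicators of distinct nodes, and for unfavourable values of the conditioning (e.g.\ an informed set under which a whole segment has almost surely been targeted many times and passed over) pushes $\Pr[A_\sigma\mid\mathcal{I}_{T_1-1}]$ up to nearly $1$. The repair is to reverse the order of exposure: the $A_\sigma$ depend only on the i.i.d.\ activity pattern, so they are \emph{unconditionally} independent with $\Pr[A_\sigma]\le e^{-c\ell}$; condition instead on the activity pattern together with the full history through round $T_1-1$ (which makes every $A_\sigma$ and the event $\{|\mathcal{I}_{T_1-1}|\ge\delta N\}$ measurable), estimate only the $B$-type events using the fresh round-$T_1$ randomness, and then integrate. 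That legitimately yields the bound $\sum_{S\subseteq[k]}e^{-c\ell|S|}e^{-c'\ell(k-|S|)}$ you wrote down.

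There is a second, smaller gap: ``$\sigma$ receives a message during round $T_1$'' does not imply ``$\sigma$ contains an informed node,'' since a message landing on an inactive node informs nobody. So your decomposition ``bad $\subseteq A_\sigma\cup B_\sigma$'' fails for a segment whose round-$T_1$ messages all hit inactive nodes. Redefine $B_\sigma$ as ``no \emph{active} node of $\sigma$ is targeted in round $T_1$''; on $A_\sigma^c$ the segment has at least $\ell p/2$ active nodes, so the same computation gives $(1-r\ell p/(2N))^{\delta N}\le e^{-\delta pr\ell/2}$ and nothing downstream changes. With these two repairs your argument goes through; the trade-off against the paper's route is that exchangeability plus monotone coupling buys exact segment independence for free, whereas your two-stage exposure requires care about what each conditioning reveals but needs no comparison model.
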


Finally, we prove that the improved cyclic algorithm is asymptotically optimal.

\begin{thm} \label{optimal}
For any algorithm, with high probability, the number of time units required for all the active nodes to be informed is at least $(1+o(1))(\frac{\log N}{\log(1+p)})$.
\end{thm}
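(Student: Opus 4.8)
The plan is to show that, no matter what the protocol does, the number of informed nodes after $t$ time units is at most $(1+p)^t$ in expectation, and then to compare this with the $\Theta(pN)$ active nodes that must all be informed before the protocol can stop. Throughout, $p\in(0,1)$ is a fixed constant. Let $A_t$ denote the number of informed nodes after $t$ time units; since only active nodes ever become informed, $A_0=1$ (the source; if the source happens to be inactive the broadcast never completes and there is nothing to prove), and $A_t$ also equals the number of active \emph{and} informed nodes.

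The device I would use is deferred decisions. Let $\mathcal{F}_t$ be the $\sigma$-field generated by everything the protocol has access to through time $t$: all of its (possibly adaptive and randomized) internal coin flips, the record of which messages were sent to which nodes, and the activity status of every node that has been contacted at least once. The key observation is that, conditioned on $\mathcal{F}_t$, every node that has not yet been contacted is still active independently with probability $p$, since nothing observed so far depends on those statuses. Next, $A$ can only increase in time step $t+1$ through a message delivered to a not-yet-contacted node that turns out to be active: a message to an already-contacted node is wasted, because a contacted active node is already informed. As each of the $A_t$ informed nodes sends at most one message, at most $A_t$ distinct fresh nodes are contacted during step $t+1$, and each is active with probability $p$ given $\mathcal{F}_t$; hence $\E[A_{t+1}\mid\mathcal{F}_t]\le(1+p)A_t$, and iterating gives $\E[A_t]\le(1+p)^t$.

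To conclude, let $M\sim\mathrm{Bin}(N,p)$ be the total number of active nodes, so that by a Chernoff bound $M\ge pN/2$ with probability $1-o(1)$. The protocol can have informed every active node by time $T$ only if $A_T\ge M$, hence
\[
\Pr[\text{done by time }T]\ \le\ \Pr\!\left[A_T\ge\tfrac{pN}{2}\right]+\Pr\!\left[M<\tfrac{pN}{2}\right]\ \le\ \frac{2(1+p)^T}{pN}+o(1)
\]
by Markov's inequality. Choosing $T=(1-\varepsilon)\frac{\log N}{\log(1+p)}$ makes $(1+p)^T=N^{1-\varepsilon}$, so the right-hand side is $\tfrac{2}{p}N^{-\varepsilon}+o(1)\to 0$. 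As $\varepsilon>0$ is arbitrary, whp the running time exceeds $(1-\varepsilon)\frac{\log N}{\log(1+p)}$ for every fixed $\varepsilon$, which gives the theorem after the routine passage from this family of statements to a single $(1+o(1))$ bound.

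I do not anticipate a real obstacle here; once the adversary model is fixed, the argument is just Markov plus Chernoff. The step requiring genuine care is the supermartingale inequality $\E[A_{t+1}\mid\mathcal{F}_t]\le(1+p)A_t$ for an \emph{arbitrary, adaptive, randomized} protocol: one must be explicit that only active informed nodes transmit, so that there are exactly $A_t$ senders in step $t+1$ (this is precisely where the ``one message per node per time unit'' constraint enters, and the bound is simply false without it), that the protocol learns nothing about uncontacted nodes, and that collisions of several messages onto a single fresh target only help the protocol. A minor additional ingredient is the concentration of the number of active nodes near $pN$, which is what keeps the threshold at $(1-o(1))\frac{\log N}{\log(1+p)}$ rather than something smaller.
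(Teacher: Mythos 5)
Your proof is correct and follows essentially the same route as the paper's: both reduce the theorem to the first-moment bound $\E[\text{\# informed at time }t]\le(1+p)^t$ for an arbitrary adaptive algorithm, then finish with Markov's inequality against the $\Theta(pN)$ active nodes (the paper via Lemma \ref{lowerbound}, which dominates $k_t$ by a branching process in a relaxed model where senders coordinate and never collide). Your deferred-decisions/supermartingale formulation is a somewhat more explicit justification of the domination step that the paper dispatches with ``clearly the new model dominates the old,'' but it is the same underlying argument.
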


Note that in our model, nodes are active independently with probability $p$. Our results can be applied to the case where exactly $K$ random nodes are active by comparing to the independent model with $p$ chosen to be slightly less then $K/N$, so that the number of active nodes will be less then $K$ with high probability (see Lemma \ref{imlemma}). If the $K$ active nodes are chosen by an adversary, the naive algorithm is not affected, but for the cyclic and improved cyclic algorithms we need to apply a random permutation to the labels of the vertices and this information needs to be transmitted along with the message.

\section{Related Works}

The topic of broadcasting information to all nodes of a network has been extensively studied for many different models. For a survey of the different models, see Pelc \cite{Pelc1}. Gasieniec and Pelc \cite{GP} gave an algorithm working in $O(\log^2 n)$ (under slightly different assumptions then ours), Diks and Pelc \cite{DP} improved this to $O(\log n)$ (see also \cite{DDMM}). Results for some variation on the model can be found in \cite{A,B,C,D,G}. In all of these the analysis is up to a constant, whereas we determine the optimal running time up to $1+o(1)$. For examples of real world systems using such broadcasting algorithms, see \cite{BDLLS,F}.

Frieze and Grimmett \cite{FG} studied the running time of the naive algorithm when there are no faults ($p=1$). Their result was further refined by Pittel \cite{Pittel}. Our Theorem \ref{naive} generalizes these results to all $0\le p \le 1$. Doerr, Huber and Levavi \cite{DHL} analyzed the naive algorithm in the case of faulty links, instead of nodes, getting the same running time as in our Thoerem \ref{naive}.

\section{Preliminaries}
Suppose we have a network with $N$ nodes, with every node connected directly to the rest of the nodes, but only some of them are \emph{active}. We assume that each node is active with some fixed probability $0<p<1$, independently from the rest of the nodes. Let $n$ be the number of active nodes.

\begin{lemma} \label{imlemma}
With high probability,
\begin{equation} \label{n}
pN - N^{2/3} < n < pN + N^{2/3}
\end{equation}
\end{lemma}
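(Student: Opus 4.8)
The plan is to recognize $n$ as a binomial random variable and apply a standard concentration inequality. Since each of the $N$ nodes is active independently with probability $p$, we have $n\sim\mathrm{Bin}(N,p)$, so $\E[n]=pN$ and $\mathrm{Var}(n)=Np(1-p)\le N/4$. The two desired inequalities in \eqref{n} together say exactly that $|n-pN|<N^{2/3}$, so it suffices to bound the probability of the complementary event.

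First I would apply Chebyshev's inequality:
\[
\Pr\!\left(|n-pN|\ge N^{2/3}\right)\le\frac{\mathrm{Var}(n)}{N^{4/3}}=\frac{p(1-p)}{N^{1/3}},
\]
which tends to $0$ as $N\to\infty$. This already establishes the lemma. If a quantitatively stronger statement is ever convenient, one can instead invoke a Chernoff/Hoeffding bound to get a probability that is exponentially small in $N^{1/3}$; for the present purpose the Chebyshev estimate is more than enough.

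There is essentially no obstacle here; the only point worth a remark is the choice of the exponent $2/3$. It is taken strictly between $1/2$, so that the deviation $N^{2/3}$ dominates the standard deviation $\Theta(\sqrt{N})$ and the probability above really does vanish, and $1$, so that $N^{2/3}=o(N)$ and the additive error term $N^{2/3}$ will be negligible compared with the quantities of interest (which are of order $pN$, or of order $\log N$ after further processing) in the later sections. Any fixed exponent in $(1/2,1)$ would serve just as well.
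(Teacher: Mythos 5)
Your proof is correct and follows exactly the same route as the paper: identify $n\sim\mathrm{Bin}(N,p)$ and apply Chebyshev's inequality to get a deviation probability of order $N^{-1/3}$. The additional remarks on Chernoff bounds and the choice of exponent are fine but not needed.
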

\begin{proof}
$n\sim Bin(N,p)$, so $\E[n]=Np$ and $Var(n)=N\cdot p \cdot (1-p)$. By Chebyshev's inequality,
$$\Pr(|n-pN|>N^{2/3})=\Pr(|n-\E[n]|>N^{2/3})\leq \frac{N\cdot p \cdot (1-p)}{N^{4/3}}\underset{N\to \infty}{\longrightarrow}0 \ .$$
\end{proof}

Our goal is to disseminate a piece of information to all active nodes. When $t=0$, only node 0 is \emph{informed} (we assume that it is active). At every time step $t\in \N$, each informed node may choose one other node and send a message to it. If the other node is active, then it becomes informed and from time $t+1$ onwards it may also send messages and inform other nodes. The nodes do not know apriori which nodes are active, although this kind of information can be sent from node to node along with the piece of information, at no additional cost.

Let $k_t$ be the number of informed nodes at time $t$, and $T_x$ to be the first time when $k_t\geq x$. We are interested in the asymptotic behaviour of $T_n$, under different algorithms. All the running times we find are logarithmic in $N$, so our results are the form $T_n=(1+o(1))C(p)\log(N)$ with high probability. That is,
$$\P\big((C(p)-\epsilon)\log(N)\le T_n \le (C(p)+\epsilon)\log(N)\big)\to 1$$
as $N\to\infty$, for any fixed $\epsilon$ and $p$.

\section{The Naive Algorithm}
In the naive algorithm each node sends the message to an independent, uniformly random node.

Fix some $0 < \epsilon < \frac{1}{2}$. In order to analyze the running time of the algorithm, we shall divide it into three stages:
\begin{itemize}
\item Stage 1: From $t=0$ where only one node is informed, until $\epsilon pN$ nodes are informed, i.e. from time 0 to $T_{\epsilon pN}$.
\item Stage 2: From time $T_{\epsilon pN}$ to $T_{(1-\epsilon) pN}$.
\item Stage 3: From time $T_{(1-\epsilon )pN}$ to $T_n$.
\end{itemize}

We now analyze the time it takes for the naive algorithm to conclude each stage.

\begin{lemma} \label{stage1}
For any $\delta >0$, there exists $m_1\in \N $ that does not depend on $N$, such that
$$\Pr (k_{t_1+m_1}<\epsilon pN)<\delta \ .$$
\end{lemma}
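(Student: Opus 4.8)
Throughout the proof we condition on the set $A$ of active nodes and, by Lemma~\ref{imlemma}, restrict to the event $n:=|A|\in(pN-N^{2/3},pN+N^{2/3})$, whose probability exceeds $1-\delta/3$ once $N$ is large. Here $t_1$ denotes the deterministic time fixed in the statement; all we use is that $t_1$ is of order $\log N/\log(1+p)$ with $t_1/\bigl(\log N/\log(1+p)\bigr)\to1$ as $\epsilon\to0$, and that it carries enough slack over $\log N/\log(1+p)$ that $t_1\ge\lceil\log_{1+p(1-3\epsilon)(1-\epsilon)}(\epsilon pN)\rceil$. The plan is to split Stage~1 at a large constant threshold $M=M(\epsilon,p,\delta)$: an \emph{escape phase}, ending when $k_t$ first reaches $M$, and a \emph{growth phase}, running from then until $k_t$ first reaches $\epsilon pN$. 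We will show that the escape phase lasts at most some $N$-free constant $m_1$ rounds with probability $>1-\delta/3$, and that the growth phase lasts at most $t_1$ rounds with probability $>1-\delta/3$. Since $k_t$ is non-decreasing, on the intersection of these three events $k_{t_1+m_1}\ge\epsilon pN$, which proves the lemma.

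\emph{Escape phase.} Suppose $k_t=j<M$. Conditioning on $A$ and on the configuration at time $t$: the $n-j$ active, uninformed nodes form a fixed set, and node~$0$ sends its message to a uniformly random node, so with probability $(n-j)/N\ge p/2$ (for $N$ large, using $n>pN-N^{2/3}$) that message lands on one of them and forces $k_{t+1}\ge j+1$. Hence, while $k_t<M$, each round increases $k_t$ with conditional probability at least $p/2$ regardless of the past, so the hitting time $T_M$ is stochastically dominated by the number of Bernoulli$(p/2)$ trials needed for $M-1$ successes, a random variable with an exponentially small upper tail. We may therefore fix $m_1=m_1(M,p,\delta)\in\N$, not depending on $N$, with $\Pr(T_M>m_1)<\delta/3$.

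\emph{Growth phase.} Set $q:=p(1-3\epsilon)$ and suppose $M\le k_t\le\epsilon pN$. Conditioning on $A$ and on the configuration at time~$t$,
\[
\E[k_{t+1}-k_t]=(n-k_t)\bigl(1-(1-1/N)^{k_t}\bigr)\ \ge\ (n-k_t)\,\frac{k_t}{N}\Big(1-\frac{\epsilon}{2}\Big)\ \ge\ q\,k_t,
\]
using $1-(1-1/N)^{k_t}\ge1-e^{-k_t/N}\ge\frac{k_t}{N}\bigl(1-\frac{k_t}{2N}\bigr)$ together with $n-k_t\ge(1-\epsilon)pN-N^{2/3}$, both valid for $\epsilon$ small and $N$ large. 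Since $k_{t+1}-k_t$ is a function of the $k_t$ independent uniform targets that changes by at most $1$ when one target is altered, McDiarmid's bounded-differences inequality yields
\[
\Pr\bigl(k_{t+1}<(1+q(1-\epsilon))\,k_t\ \big|\ \text{configuration at time }t\bigr)\ \le\ e^{-2\epsilon^2 q^2 k_t}.
\]
Call a round in which this happens \emph{bad}. From time $T_M$ on, as long as no bad round has occurred and $k_t<\epsilon pN$ we have $k_{T_M+i}\ge M(1+q(1-\epsilon))^i$, so $k_t$ reaches $\epsilon pN$ within $R:=\lceil\log_{1+q(1-\epsilon)}(\epsilon pN)\rceil$ rounds of $T_M$. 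A union bound over these rounds, using $(1+q(1-\epsilon))^i\ge1+iq(1-\epsilon)$, bounds the probability of ever seeing a bad round by $\sum_{i\ge0}e^{-2\epsilon^2 q^2 M(1+iq(1-\epsilon))}\le2e^{-2\epsilon^2 q^2 M}$ for $M$ large, so choosing $M=M(\epsilon,p,\delta)$ large makes it $<\delta/3$. Finally $R\le t_1$ by the choice of $t_1$, hence $T_{\epsilon pN}\le m_1+R\le m_1+t_1$ on the relevant events, as needed.

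The genuine difficulty is the trade-off inside the growth phase. To make $R$ (equivalently $t_1$) equal $(1+o(1))\log N/\log(1+p)$ --- which is what Theorem~\ref{naive} needs after letting $\epsilon\to0$ --- one wants $q(1-\epsilon)$ as close to $p$ as possible, i.e.\ $\epsilon$ small; but a smaller $\epsilon$ makes the per-round deviation bound $e^{-2\epsilon^2 q^2 k_t}$ weaker, forcing the threshold $M$ --- and hence the constant $m_1$ --- to be larger so that the chained deviations still sum below $\delta$. One must verify that $M$, $m_1$, and $q$ can indeed be fixed simultaneously as constants independent of $N$. The escape-phase estimate, the $O(\epsilon)$ loss incurred because $1-(1-1/N)^{k_t}$ dips below $k_t/N$ as $k_t$ nears $\epsilon pN$, and the appeal to Lemma~\ref{imlemma} are all routine.
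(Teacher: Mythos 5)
Your proof is correct, but it takes a genuinely different route from the paper's. The paper couples $k_t$ from below with a supercritical Galton--Watson process $Z_t$ (two children with probability $p'=p(1-\epsilon)$, one otherwise) and then invokes branching-process machinery: the martingale $W_q=Z_q/(1+p')^q$, the Kesten--Stigum theorem to get $\Pr(W=0)=0$, and an optional-stopping argument to conclude that $Z_{t_1+\ell}\ge(1+p')^{t_1}$ except with probability $\delta$. You instead argue directly on $k_t$: a bootstrap to a constant level $M$ via a geometric-trials bound, followed by round-by-round multiplicative growth certified by McDiarmid's inequality and a chained union bound whose error terms are summable once $M$ is large. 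Your version is more elementary and self-contained (no appeal to branching-process limit theory), and it sidesteps the coupling step, which the paper states rather loosely (collisions between messages mean the probability that a given informed node recruits a \emph{fresh} active node is not obviously at least $p(1-\epsilon)$ without adjusting constants); the price is a longer computation and a slightly worse per-round growth rate $1+p(1-3\epsilon)(1-\epsilon)$ in place of the paper's $1+p(1-\epsilon)$, so your $t_1$ exceeds the paper's implicit one (defined by $(1+p')^{t_1}=\epsilon pN$) by a factor $1+O(\epsilon)$. Since the lemma is only ever used with $\epsilon\to 0$ to produce the $(1+o(1))\frac{\log N}{\log(1+p)}$ bound in Theorem~\ref{naive}, this discrepancy is harmless, but you are right to flag explicitly that you prove the statement for this slightly larger $t_1$.
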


\begin{proof}
Define $Z_t$ to be a Galton-Watson branching process in which every node has two children with probability $p'=p(1-\epsilon )$, and one child with probability $1-p'$. The expected number of children is $1+p'$. As long as $k_t\leq \epsilon pN$, $Z_t$ is dominated by $k_t$ since we can couple them together, and for each node in $k_t$, the probability of informing an active uninformed node at this point is larger than the probability of each node in $Z_t$ having two children. Therefore, it is enough to show that there exists such a $m\in \N $ for $Z_t$, i.e. $m$ such that $\Pr(Z_{t_1+m}<\epsilon p N)<\delta$.

Define $W_q=\frac{Z_q}{(1+p')^q}$. It is known from branching process theory (see \cite{Y} and \cite{Z} p.14) that $W_q$ is a martingale, and $W_q$ converges almost surely to some random variable $W$. Since the offspring distribution has finite support and the probability for no offsprings is zero, by Kesten-Stigum theorem we get that $\E[W]=1$ and that $\Pr(W=0)=0$. $W_q$ is a martingale, so by Doob's optional stopping theorem we get that for any $q$,
$$\Pr(W>2W_q \mid W_q)\leq \frac{1}{2} \ .$$

For $\ell, q \in \N$, define $A$ to be the event $W_q<\frac{1}{(1+p')^\ell}$, so $\Pr(W\leq \frac{2}{(1+p')^\ell} \mid A)\geq \frac{1}{2}$.
Therefore,
$$\frac{1}{2}\leq \Pr\left(W\leq \frac{2}{(1+p')^\ell} \Big| A\right)=\frac{\Pr\left(W\leq \frac{2}{(1+p')^\ell}\cap A\right)}{\Pr(A)}\leq \frac{\Pr\left(W\leq \frac{2}{(1+p')^\ell}\right)}{\Pr(A)} ,$$
so
$$\Pr(A)\leq 2\Pr\left(W\leq \frac{2}{(1+p')^\ell}\right)\ .$$

Since $\Pr(W=0)=0$, there exists $\ell \in \N$ such that $ 2\Pr(W\leq \frac{2}{(1+p')^\ell})<\delta$. Taking $q=t_1+\ell$, we get
\[
\begin{split}
\Pr\left(Z_{t_1+\ell}<(1+p')^{t_1}\right)&=\Pr\left(\frac{Z_q}{(1+p')^q}<\frac{1}{(1+p')^\ell}\right)=\Pr\left(W_q<\frac{1}{(1+p')^\ell}\right) \\
&=\Pr(A)\leq 2\Pr\left(W\leq \frac{2}{(1+p')^\ell}\right)<\delta,
\end{split}
\]
and since $(1+p')^{t_1}=\epsilon p N$ we get $\Pr(k_{t_1+\ell}<\epsilon pN)<\delta$.
\end{proof}

We proved an upper bound of roughly $\frac{\log N}{\log(1+p)}$ on the time it takes for stage 1 to conclude. The following lemma establishes a corresponding lower bound for any message sending algorithm, which also yields Theorem \ref{optimal}.

\begin{lemma} \label{lowerbound}
For any $0<a\le 1$ and for any algorithm, the probability that after $\frac{\log N}{\log(1+p)}-K$ steps there are $a p N$ informed nodes is at most $\frac{1}{a p(1+p)^K}$.
\end{lemma}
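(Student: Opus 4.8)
The plan is to bound, for any fixed algorithm, the expected number of informed nodes after $t$ steps, and then apply Markov's inequality. Let $k_t$ denote the number of informed nodes at time $t$ (a random variable depending on which nodes turn out to be active and on the algorithm's coin tosses). The key observation is a one-step inequality: at each time step, every informed node sends exactly one message, so at most $k_t$ messages are sent; each such message, sent to a previously-uninformed node, succeeds in informing a new node only if that target node is active, which happens with probability at most $p$ (independently of the history, since the target's activity status has not yet been revealed — note the algorithm cannot know in advance which nodes are active). Hence
\[
\E[k_{t+1}\mid \mathcal{F}_t]\le k_t + p\,k_t = (1+p)k_t,
\]
where $\mathcal{F}_t$ is the natural filtration. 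Since $k_0=1$, iterating gives $\E[k_t]\le (1+p)^t$.

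From here the lemma is immediate: taking $t=\frac{\log N}{\log(1+p)}-K$ we get $\E[k_t]\le (1+p)^t = N/(1+p)^K$, and then by Markov's inequality
\[
\Pr\!\left(k_t\ge apN\right)\le \frac{\E[k_t]}{apN}\le \frac{N/(1+p)^K}{apN}=\frac{1}{ap(1+p)^K}.
\]
Theorem \ref{optimal} then follows by choosing, say, $a=1-\epsilon$ (so that $apN$ is below the typical number of active nodes $n\approx pN$ guaranteed by Lemma \ref{imlemma}) and letting $K=K(N)\to\infty$ slowly, e.g. $K=\log\log N$; then the failure probability $\frac{1}{ap(1+p)^K}\to 0$, so with high probability fewer than $apN\le n$ nodes are informed at time $\frac{\log N}{\log(1+p)}-K = (1+o(1))\frac{\log N}{\log(1+p)}$, meaning the process has not yet terminated.

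The only subtle point — and the step I would write most carefully — is the justification of the one-step bound $\E[k_{t+1}\mid\mathcal{F}_t]\le(1+p)k_t$ for an \emph{arbitrary} algorithm, including adaptive and randomized ones. The essential fact is that the activity status of any node that has never been contacted is still a fresh $\mathrm{Bernoulli}(p)$ variable conditionally on $\mathcal{F}_t$; since each of the $k_t$ messages sent at step $t$ either goes to an already-informed node (creating nothing new) or to an untouched node (which becomes informed with conditional probability at most $p$), linearity of conditional expectation yields the bound regardless of how cleverly the targets are chosen. One should also note the harmless convention that an informed node may "waste" its message or send to an informed node, which only decreases $k_{t+1}$; and that no gain comes from the fact that inactive contacted nodes are identified, since that only removes future options. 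No real obstacle remains beyond stating this coupling/conditioning argument cleanly.
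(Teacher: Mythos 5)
Your proof is correct and follows essentially the same route as the paper: both arguments reduce to showing $\E[k_t]\le(1+p)^t$ and then applying Markov's inequality at $t=\frac{\log N}{\log(1+p)}-K$. The only difference is presentational --- the paper obtains the expectation bound by coupling $k_t$ with a dominating branching process (two children with probability $p$, one otherwise), whereas you derive the one-step inequality $\E[k_{t+1}\mid\mathcal{F}_t]\le(1+p)k_t$ directly from the fact that uncontacted nodes remain fresh $\mathrm{Bernoulli}(p)$ variables; your version is, if anything, slightly more self-contained.
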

\begin{proof}
Let $0<a<1$. Let us define a new model in which each node knows which nodes already received a message and which didn't, but not which nodes are active. Furthermore, assume that all informed nodes can coordinate their message sending. In this model, clearly the optimal algorithm would be for each node to send a message to some new node and to make sure no two nodes send messages to the same node. Hence, the number of informed nodes in the optimal algorithm under this model, $Z_t$, is a branching process that has two children with probability $p$, and one child with probability $1-p$.

Clearly, the new model dominates the old so we can couple $Z_t$ and $k_t$ such that $Z_t\geq k_t$ for every $t$. Taking $t=\frac{\log N}{\log(1+p)}-K$ and using Markov's inequality, we get
$$\Pr(k_{t_0}\ge apN)\le \Pr(Z_{t_0}\geq apN)\leq \frac{\E[Z_{t_0}]}{apN}=\frac{(1+p)^{\frac{\log N}{\log(1+p)}-K}}{apN}=\frac{1}{ap(1+p)^K} \ .$$
\end{proof}

\begin{proof}[Proof of Theorem \ref{optimal}]
By Lemma \ref{lowerbound}, if we take $t=\frac{\log N}{\log(1+p)}-K(N)$ where $K(N)$ is any function which is $o(\log(N))$ and $\omega(1)$ we get that with high probability, the algorithm does not conclude before time $t$.
\end{proof}

Let us now continue with analyzing the stages of the naive algorithm.

\begin{lem} \label{stage2}
For every $\delta>0$ and $0<c_1<c_2<p$ there exists $m\in \N$ such that
$$\Pr(T_{c_2 N}-T_{c_1 N}>m)<\delta \ .$$
\end{lem}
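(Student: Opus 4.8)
The plan is to exploit the fact that once $c_1N$ nodes are informed, a single round of the naive algorithm informs a constant fraction of the still‑uninformed active nodes, so that the number of uninformed active nodes decays geometrically and a bounded (independent of $N$) number of rounds takes us from $c_1N$ informed nodes to $c_2N$. I would first condition on the event $\mathcal{E}$ of Lemma~\ref{imlemma} that $pN-N^{2/3}<n<pN+N^{2/3}$; since $c_2<p$, this guarantees for large $N$ that $n>c_2N$ and, crucially, that the number of uninformed active nodes stays $\Theta(N)$ throughout. It is convenient to condition on the entire set $\mathcal{A}$ of active nodes, so that the only remaining randomness is the message‑sending, which is independent of $\mathcal{A}$. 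Write $s=T_{c_1N}$ and $U_i=n-k_{s+i}$ for the number of uninformed active nodes $i$ rounds into stage 2; since $k_t$ is nondecreasing we have $U_0\le n-c_1N$ and $k_{s+i}\ge c_1N$ for all $i\ge 0$.

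The key step is a per‑round estimate. Conditioning on the history $\mathcal{H}_{i-1}$ of the first $i-1$ rounds: if $k_{s+i-1}<c_2N$ then on $\mathcal{E}$ we have $U_{i-1}>(p-c_2)N-N^{2/3}=:L_N\to\infty$, and each of those $U_{i-1}$ uninformed active nodes is hit in round $i$ with probability $1-(1-1/N)^{k_{s+i-1}}\ge 1-e^{-c_1}=:\beta\in(0,1)$, so the number $X_i$ of newly informed active nodes has $\E[X_i\mid\mathcal{H}_{i-1}]\ge\beta U_{i-1}$. As $X_i$ is a function of the $k_{s+i-1}\le N$ independent uniform targets with bounded differences $1$, McDiarmid's inequality gives $\Pr\big(X_i<\tfrac{\beta}{2}U_{i-1}\,\big|\,\mathcal{H}_{i-1}\big)\le 2\exp\!\big(-\beta^2 L_N^2/(2N)\big)=o(1)$, uniformly in the history (and the bound holds trivially if instead $k_{s+i-1}\ge c_2N$). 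Hence, for any fixed $m$, with probability $1-o(1)$ every round $i\le m$ is ``productive'': either $k_{s+i-1}\ge c_2N$ already, or $U_i\le(1-\tfrac{\beta}{2})U_{i-1}$. On that event, if $c_2N$ has not been reached within $m$ rounds then $k_{s+i-1}<c_2N$ for all $i\le m$, so $U_m\le(1-\tfrac{\beta}{2})^m U_0\le(1-\tfrac{\beta}{2})^m\big((p-c_1)N+N^{2/3}\big)$, while at the same time $U_m=n-k_{s+m}>(p-c_2)N-N^{2/3}$. Choosing $m$ (depending only on $p,c_1,c_2$) so that $(1-\tfrac{\beta}{2})^m<\tfrac{p-c_2}{p-c_1}$ makes these incompatible for $N$ large, forcing $k_{s+m}\ge c_2N$, i.e.\ $T_{c_2N}-T_{c_1N}\le m$. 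Combining with $\Pr(\mathcal{E}^c)\to 0$ and the union bound over the $m$ rounds yields $\Pr(T_{c_2N}-T_{c_1N}>m)\to 0$, so in particular this $m$ satisfies $\Pr(T_{c_2N}-T_{c_1N}>m)<\delta$ for $N$ large.

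The step I expect to be the main obstacle is the uniform‑in‑history concentration of $X_i$: one must show the per‑round failure probability is $o(1)$ regardless of how the process has evolved, which is precisely where the hypothesis $c_2<p$ is needed (to keep $U_{i-1}=\Theta(N)$) together with a concentration tool for the number of occupied bins — McDiarmid's bounded‑differences inequality above, or equivalently the negative association of the occupancy counts followed by Chebyshev's inequality. Once this estimate is in place, the geometric‑decay bookkeeping and the union bound over the constantly many rounds are routine.
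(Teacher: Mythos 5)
Your argument is correct, but it follows a genuinely different route from the paper's. The paper proves Lemma \ref{stage2} by coupling $k_t$ from below with a Galton--Watson process $Z_t$ whose individuals have two children with probability $p'=p-c_2$ and one child otherwise, and then recycles the martingale machinery of Lemma \ref{stage1} (the normalized martingale $W_q=Z_q/(1+p')^q$, Kesten--Stigum, and optional stopping) to show $Z$ exceeds $c_2N$ within $\log(c_2/c_1)/\log(1+p')+m'$ steps. You instead work directly with the occupancy process: conditionally on the history, the number of distinct uninformed active nodes hit in one round concentrates (McDiarmid over the independent uniform targets) around a mean that is at least a fixed fraction $\beta=1-e^{-c_1}$ of the current uninformed active set, which remains of size $\Theta(N)$ while $k_t<c_2N$ precisely because $c_2<p$; the resulting geometric decay forces $k$ past $c_2N$ in a bounded number of rounds. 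Your approach buys two things: it handles collisions between senders automatically (the branching-process domination in the paper tacitly treats each sender as recruiting a fresh node with probability at least $p-c_2$, which is delicate once one accounts for several senders targeting the same node in the same round), and it yields an $m$ depending only on $p,c_1,c_2$ with failure probability $o(1)$, which is stronger than the stated $<\delta$. The paper's route is shorter and reuses the stage-1 analysis verbatim. The only point worth spelling out in your write-up is that the $\sigma$-field $\mathcal{H}_{i-1}$ must contain the set of active nodes and all targets chosen through round $i-1$, so that $T_{c_1N}$ is measurable with respect to it while the round-$i$ targets remain independent of it --- which is exactly what your initial conditioning on $\mathcal{A}$ arranges.
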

\begin{proof}
Define $Z_t$ to be a branching process that with probability $p'=p-c_2$ has two children, and $1-p'$ to have one, so that the expected number of children is $1+p'$. In order for $Z_t$ to be compatible with $k_t$, we start $Z_t$ at time $T_{c_1 N}$ and with value $Z_{T_{c_1 N}}=cN$. As long as $k_t<c_2 N$, by coupling, $Z_t$ is dominated by $k_t$, so it is enough to show that there exists $m\in N$ such that $\Pr(Z_{T_{c_1 N}+m}< c_2 N)<\delta$.

Defining $W_q=\frac{Z_q}{(1+p')^q}$, the conditional expectation $\E[Z_{T_{c_1 N}+\ell} \mid Z_{T_{c_1 N}}]=c_1 N\cdot(1+p')^\ell$ will exceed $c_2 N$ when $(1+p')^\ell=\frac{c_2}{c_1}\Longleftrightarrow \ell=\frac{\log (\frac{c_2}{c_1})}{\log (1+p')}$. By the same argument as in Lemma \ref{stage1}, there exists $m'\in \N$ such that $\Pr(Z_{T_{c_1 N}+\ell+m'}<c_2 N)<\delta$, so taking $m=\ell+m'$ we get the wanted result.
\end{proof}

In particular, we see that for any $\delta>0$, we can choose $m_2$ such that $\Pr(T_{(1-\epsilon) p N}-T_{\epsilon p N}>m_2)<\delta$.

Define $t_3=\frac{\log N}{p(1-\epsilon)}$.

\begin{lem} \label{stage3upper}
For $\delta>0$, there exists $m_3$ such that
$$\Pr(T_n-T_{(1-\epsilon) pN}>t_3+m_3)<\delta \ .$$
\end{lem}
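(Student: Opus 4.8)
The plan is to show that once $(1-\epsilon)pN$ nodes are informed, the remaining $\le 2\epsilon p N$ active uninformed nodes each get informed quickly, because at this point there are already $\Theta(pN)$ informed nodes firing messages every step, and the pool of uninformed active targets is comparatively small. Concretely, after time $T_{(1-\epsilon)pN}$ there are at least $(1-\epsilon)pN$ informed nodes; each sends to a uniformly random node, so a fixed uninformed active node $v$ is hit by none of them in a single step with probability at most $(1-\tfrac{1}{N})^{(1-\epsilon)pN}\le e^{-(1-\epsilon)p}$. Hence $v$ remains uninformed after $s$ further steps with probability at most $e^{-(1-\epsilon)ps}$. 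This is a per-node bound; the subtlety is that the set of informed nodes grows over time and these events are not independent across nodes, but a union bound over the at most $2\epsilon pN \le N$ uninformed active nodes suffices.

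First I would fix the constant $t_3 = \frac{\log N}{p(1-\epsilon)}$ and argue that after $s = t_3 + m_3$ further steps, $\P(v \text{ uninformed}) \le e^{-(1-\epsilon)p(t_3+m_3)} = e^{-\log N}\cdot e^{-(1-\epsilon)pm_3} = N^{-1}e^{-(1-\epsilon)pm_3}$. Second, by a union bound over the (at most $N$) uninformed active nodes at time $T_{(1-\epsilon)pN}$, the probability that \emph{some} active node remains uninformed after $t_3+m_3$ steps is at most $e^{-(1-\epsilon)pm_3}$, which can be made smaller than $\delta$ by choosing $m_3$ large enough depending only on $\delta, p, \epsilon$ — not on $N$. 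Third, I would make the coupling/monotonicity argument precise: to lower-bound the number of informed senders by $(1-\epsilon)pN$ uniformly over the interval $[T_{(1-\epsilon)pN}, T_{(1-\epsilon)pN}+s]$, note the informed set is nondecreasing, so $k_t \ge k_{T_{(1-\epsilon)pN}} \ge (1-\epsilon)pN$ throughout; thus each step independently gives $v$ a chance $\ge 1 - e^{-(1-\epsilon)p}$ of being informed, and these trials can be taken independent across steps for a fixed $v$ by revealing the random targets step by step.

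A technical point worth handling carefully: a node $v$ may be informed not directly by one of the original $(1-\epsilon)pN$ nodes but by a node informed in the meantime — this only helps, so the bound $\P(v\text{ uninformed after }s\text{ steps}) \le (e^{-(1-\epsilon)p})^{s}$ using just the original senders is a valid upper bound. Also one should condition on the event from Lemma \ref{imlemma} that $n \le pN + N^{2/3} \le N$, which holds whp, so the union bound is over at most $N$ nodes. The main obstacle is purely bookkeeping — ensuring the per-step success probabilities for a fixed $v$ really are independent (achieved by exposing the random destinations one step at a time, ignoring any additional informed senders) and that $m_3$ is genuinely $N$-independent. Once the per-node bound $N^{-1}e^{-(1-\epsilon)pm_3}$ is in hand, choosing $m_3 = \lceil \frac{1}{(1-\epsilon)p}\log(1/\delta)\rceil$ finishes the proof.
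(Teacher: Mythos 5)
Your proposal is correct and follows essentially the same route as the paper's proof: both use the per-step bound $(1-\frac1N)^{(1-\epsilon)pN}\le e^{-p(1-\epsilon)}$ for a fixed uninformed node, iterate it over $t_3+m_3$ steps, and then sum over the remaining uninformed nodes (your union bound over at most $N$ nodes is exactly the paper's Markov inequality applied to $X_m$, the count of uninformed nodes, with the slightly sharper bound $X_0<2\epsilon pN$). The only difference is the resulting constant $m_3$, which is immaterial.
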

\begin{proof}
Define $X_m$ to be the number of uninformed nodes after $m$ steps in stage three. By lemma \ref{imlemma} we have $X_0 < 2 \epsilon p N$ with high probability.

In the third stage, $k_t>(1-\epsilon) pN$. For a node not informed yet, the probability of not being informed by a specific node is $\left(1-\frac{1}{N}\right)$, so the probability of not being informed by any of the informed nodes at a certain step is smaller than
\begin{equation}
\left(1-\frac{1}{N}\right)^{(1-\epsilon)pN} \le  e^{-p(1-\epsilon )}
\end{equation}
and after $m$ steps in stage 3, the probability of any specific node not being informed is smaller than $e^{-p(1-\epsilon)m}$, so $\E[X_m]\le 2 \epsilon p N e^{-p(1-\epsilon)m}$. Therefore, using Markov's inequality, the probability that there is at least one uninformed node is bounded by
$$\P(X_m\geq 1)\leq \E[X_m],$$
and $\E[X_m]<\delta$ will hold when
$$2 \epsilon pN e^{-p(1-\epsilon )m}<\delta\Longleftrightarrow  m>\frac{\log N}{p(1-\epsilon)} + \frac{\log(\frac{2p \epsilon}{\delta})}{p(1-\epsilon)} \ .$$

Taking $m_3= \frac{\log(\frac{2p \epsilon}{\delta})}{p(1-\epsilon)}$, we get the desired result.
\end{proof}

Denote $t'_3=\frac{\log N}{p}$.

\begin{lem} \label{stage3lower}
For $\delta>0$, there exists $m$ that does not depend on $N$, such that
$$\Pr(T_n-T_{(1-\epsilon)p N}< t'_3-m)<\delta \ .$$
\end{lem}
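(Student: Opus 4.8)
The plan is to show that the third stage lasts at least $s:=t'_3-m=\tfrac{\log N}{p}-m$ steps, i.e.\ that $\Pr(T_n\le T+s-1)<\delta$ with $T:=T_{(1-\epsilon)pN}$, by exhibiting (with probability $\ge1-\delta$) an active node that is still uninformed $s$ steps after $T$. Write $S$ for the set of active but uninformed nodes at time $T$, so $|S|=n-k_T$, and observe that if $T_n\le T+s-1$ then every node of $S$ must be targeted by some informed node during the steps $T,\dots,T+s-1$, so it suffices to bound the probability of that last event.

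First I would check that $|S|\ge cN$ with high probability, for some constant $c=c(\epsilon,p)>0$. The crude estimate ``$k$ at most doubles in one step, hence $k_T\le2(1-\epsilon)pN$'' is useless once $\epsilon<\tfrac12$, so instead I would control the single-step increase $D:=k_T-k_{T-1}$ at the step just before the threshold is crossed. Writing $k_{T-1}=(1-\epsilon)pN-xN$ with $x\ge0$: by Lemma~\ref{imlemma} there are $n-k_{T-1}\le(\epsilon p+x)N+o(N)$ active uninformed nodes at time $T-1$, each becoming informed at that step with probability $1-(1-1/N)^{k_{T-1}}\le1-e^{-(1-\epsilon)p+x}+o(1)$, so $\E[D]\le(\epsilon p+x)\bigl(1-e^{-(1-\epsilon)p+x}\bigr)N+o(N)$; since $D$ is a sum of negatively associated indicators its variance is at most its mean, so Chebyshev gives the same bound for $D$ itself whp. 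Combined with $n\ge pN-N^{2/3}$ this yields $|S|=n-k_T\ge(\epsilon p+x)e^{-(1-\epsilon)p+x}N-o(N)$, and as $y\mapsto(\epsilon p+y)e^{-(1-\epsilon)p+y}$ is increasing in $y\ge0$ we may take $c:=\tfrac12\epsilon p\,e^{-(1-\epsilon)p}$.

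The second ingredient handles the dependence between the fates of the nodes of $S$ by a ``phantom sender'' device. On the event of Lemma~\ref{imlemma}, at most $n\le pN+N^{2/3}$ messages are sent at each step of stage three, so I would pad the senders at each step up to exactly $m^*:=\lceil pN+N^{2/3}\rceil$ by adding phantom senders that inform nobody, and write $\xi_{i,t}$ ($1\le i\le m^*$, $T\le t<T+s$) for the resulting i.i.d.\ uniform targets. For a node $v$ let $M_v$ be the event that none of the $\xi_{i,t}$ equals $v$; then $\Pr(M_v)=(1-1/N)^{m^*s}$, and since $m^*s=N\log N-pmN+o(N\log N)$ a routine estimate gives $\Pr(M_v)=\tfrac{e^{pm}}{N}(1-o(1))$. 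If $v\in S$ and $M_v$ holds then no real sender ever targets $v$, so $v$ is still uninformed at $T+s$; hence $\{T_n\le T+s-1\}\subseteq\bigcap_{v\in S}M_v^{\mathrm c}$. Now each $M_v$ is a monotone function of the block $\{\mathbf{1}[\xi_{i,t}=v]\}_{i,t}$, these blocks are disjoint, and the whole family $\{\mathbf{1}[\xi_{i,t}=w]\}_{w,i,t}$ is negatively associated (each step $(i,t)$ contributes the indicator vector of a single uniform draw, and distinct steps are independent); hence $\{M_v^{\mathrm c}\}_v$ is negatively associated and, on the event $|S|\ge cN$,
\[
\Pr\Bigl(\bigcap_{v\in S}M_v^{\mathrm c}\Bigr)\le\prod_{v\in S}\Pr(M_v^{\mathrm c})\le\Bigl(1-\tfrac{e^{pm}}{N}(1-o(1))\Bigr)^{cN}\le e^{-\tfrac12 c\,e^{pm}}.
\]
Taking expectations over the conditioning and choosing first $m$ so large that $e^{-\tfrac12 ce^{pm}}<\delta/2$, and then $N$ so large that the ``whp'' events above fail with total probability $<\delta/2$, completes the proof.

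The step I expect to be the real obstacle is the passage from ``the expected number of active nodes still uninformed after $s$ steps is large'' to ``with probability close to $1$ at least one is'': a direct second-moment computation is unpleasant because the fates of distinct nodes are coupled through the random sender count $k_t$, so the pairwise correlations are awkward to control head-on. The phantom-sender reduction is exactly what makes this clean—replacing the random set of senders by a fixed number $m^*$ of i.i.d.\ uniform targets exposes the negative association and lets a one-line product bound finish the job. The remaining point needing care, minor but genuinely necessary since $\epsilon$ may be below $\tfrac12$, is the one-step increase estimate behind $|S|=\Omega(N)$.
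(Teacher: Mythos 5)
Your proof is correct, and at its core it follows the same strategy as the paper's: dominate the true process by one in which a fixed number of roughly $pN$ i.i.d.\ uniform targets is drawn at every step (the paper achieves this by letting \emph{all} active nodes transmit; your phantom-sender padding to $m^*=\lceil pN+N^{2/3}\rceil$ is the same device), show that $\Omega(N)$ active nodes are still uninformed at time $T_{(1-\epsilon)pN}$, and exploit negative dependence of the events ``node $v$ is never targeted'' to conclude that some such node survives $\frac{\log N}{p}-m$ further steps. Two differences are worth recording. First, where the paper bounds $Var(Y_k)\le\E[Y_k]$ via pairwise negative correlation and applies Chebyshev to get $\Pr(Y_k=0)\le 1/\E[Y_k]$, you invoke full negative association to obtain the product bound $\Pr\bigl(\bigcap_v M_v^{\mathrm c}\bigr)\le\prod_v\Pr(M_v^{\mathrm c})\le e^{-\frac12 ce^{pm}}$; this gives an exponentially rather than polynomially small failure probability, though either suffices for the lemma. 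Second, and more substantively, you actually justify the initial count: the paper starts its dominating process at $Y_0=\frac{\epsilon}{2}pN$ uninformed nodes without argument, which implicitly assumes the informed count does not overshoot $(1-\epsilon)pN$ too much at the step where the threshold is crossed. Your one-step increment estimate (each of the $\ge(\epsilon p+x)N-o(N)$ uninformed active nodes at time $T-1$ survives that step with probability $(1-1/N)^{k_{T-1}}\ge e^{-(1-\epsilon)p+x}-o(1)$, plus a variance bound for concentration) supplies exactly the missing argument, and it also shows that the justified constant is $\epsilon p\,e^{-(1-\epsilon)p}$ rather than $\epsilon p/2$ --- for $p$ close to $1$ the paper's choice of $\frac{\epsilon}{2}pN$ is not actually dominated by the true number of survivors, though this only shifts the additive constant $m$ and not the conclusion. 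Your identification of this overshoot issue as the genuinely delicate point is accurate.
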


\begin{proof}
We shall describe a different model which dominates our model and the lemma holds for it. In the new model, all active nodes send messages, even the uninformed nodes. If an uninformed node receives a message from an uninformed node, it becomes informed. Start with $Y_0=\frac{\epsilon}{2} p N$ uninformed nodes and let $Y_k$ to be the number of uninformed nodes after $k$ steps of this model. Obviously, we can couple this new model with our original model so that the number of uninformed nodes in the original model after $k$ steps of the third stage is at least $Y_k$.

The expectation of $Y_k$ is
$$\E[Y_k]=\frac{\epsilon}{2} p N \left(1-\frac1N\right)^{pNk} = \frac{\epsilon}{2} p N  e^{-(1+o(1))pk} \ .$$

Let us calculate $Var(Y_k)$. Denote by $B$ the set of uninformed nodes at the onset (so $|B|= \frac{\epsilon pN}{2}$), and $Y_{i,k}$ to be $1$ if the $i$-th node is uninformed after $k$ steps of the new model, so that $Y_k=\underset{i\in B}{\sum}Y_{i,k}$.
$$Var(Y_k)=\E[Y_k^2]-\E[Y_k]^2=\sum_{i\in B}^{}\E[Y_{i,k}^2] + \sum_{i\neq j}^{}\E[Y_{i,k} Y_{j,k}]-\E[Y_k]^2.$$
Observe that $\Pr(Y_{j,k}=1|Y_{i,k}=1)\le \Pr(Y_{j,k}=1)$, and therefore
\[
\begin{split}
\E[Y_{i,k} Y_{j,k}]&=\Pr((Y_{i,k}=1)\cap (Y_{j,k}=1)) \\
&=\Pr(Y_{i,k}=1) \ \Pr(Y_{j,k}=1|Y_{i,k}=1)\leq \Pr(Y_{i,k}=1)^2 \ .
\end{split}
\]
Now, since $\E[Y_k]=\frac{\epsilon pN}{2}\cdot \Pr(Y_{i,k}=1)$, we get $\underset{i\neq j}{\sum}\E[Y_{i,k} Y_{j,k}]\leq (\frac{\epsilon pN}{2})^2\cdot \Pr(Y_{i,k})^2=\E[Y_k]^2$, and therefore
$$Var(Y_k)\leq \sum_{i\in B}^{}\E[Y_{i,k}^2] = \E[Y_k] \ .$$

By Chebyshev's inequality we now have
$$\Pr(Y_k=0)\leq \Pr(|Y_k-\E[Y_k]|\geq\E[Y_k])\leq\frac{Var(Y_k)}{\E[Y_k]^2}\leq \frac{1}{\E[Y_k]} \ .$$

Therefore, as long as $\E[Y_k]\geq\frac{1}{\delta}$, the probability of finishing is smaller than $\delta$. Taking $k=\frac{\log(\delta\epsilon p N/2)}{p}$ yields $\E[Y_k]\geq\frac{1}{\delta}$, so taking $0<m=-\frac{\log(\delta\epsilon p /2)}{p}$ we get the desired result.
\end{proof}

We can now prove Theorem \ref{naive}.
\begin{proof}[Proof of Theorem \ref{naive}]
Observe that $T_n=T_n-T_{(1-\epsilon)pN}+T_{(1-\epsilon)pN}-T_{\epsilon pN}+T_{\epsilon pN}$.
By Lemma \ref{stage1} we know that there exists $m_1\in \N$ such that $ T_{\epsilon pN} \leq \frac{\log N}{\log (1+p(1-\epsilon))}+m_1$ with probability larger than $1-\delta$. By Lemma \ref{stage2} we know that there exists $m_2\in \N$ such that with probability larger than $1-\delta$, $T_{(1-\epsilon)pN}-T_{\epsilon pN}>m_2$. By Lemma \ref{stage3upper} we know that there exists $m_3\in \N$ such that with probability larger than $1-\delta$, $T_n-T_{(1-\epsilon)pN}<\frac{\log N}{p}+m_3$. Taking $N\longrightarrow \infty$, and $\epsilon, \delta \longrightarrow 0$ slowly enough such that $m_1,m_2,m_3=o(\log N)$, we get that for the naive algorithm, with high probability, all active nodes will be informed by time $\left(1+o(1)\right)\left(\frac{1}{\log(1+p)}+\frac{1}{p}\right)\log N$.

By Lemma \ref{lowerbound}, we know there exists some $m_4$ such that is stage 1 is not done before $\frac{\log N}{\log (1+p)}-m_4$, with probability at least $1-\delta$. By Lemma \ref{stage3lower}, we know that there exists $m_5$ such that with probability larger than $1-\delta$, stage 3 is not done before $\frac{\log N}{p}-m_5$. Again, taking $N\longrightarrow \infty$, and $\epsilon, \delta \longrightarrow 0$ slowly enough such that $m_4$ and $m_5$ are $o(\log N)$, we get that the naive algorithm, with high probability, will not be done before time  $\left(1+o(1)\right)\left(\frac{1}{\log(1+p)}+\frac{1}{p}\right)\log N$.
\end{proof}

\section{The Cyclic Algorithm}

As Lemmas \ref{stage1}, \ref{lowerbound} and \ref{stage2} show, the first and second stages of the naive algorithm are optimal. The cyclic algorithm consists of two phases (not to be confused with the three stages of analysis).

The first phase consists of sending messages randomly, as in the naive algorithm, until some time $(1+o(1))\frac{\log N}{\log(1+p)}$ which guarantee (by Lemmas \ref{stage1} and \ref{stage2}) that $(1-\epsilon)p N$ nodes are informed, with high probability, for some fixed $\epsilon>0$. In order for all the informed nodes to know when the first phase has finished, each nodes includes the time (as measured from the start of the algorithm) in the information it sends.
%We can then choose $\epsilon$ to be some slowly decaying function of $N$, such that the time it takes to finish the first phase is stays $(1+o(1))\frac{\log N}{\log(1+p)}$.

In the second phase, nodes are sending messages in a cyclic order: Mark the nodes $\{1,\dots,N\}$. If node $i$ is informed at the beginning of the second phase, it will send messages to node $i+1 \mod N$, then $i+2 \mod N$, and so on.

Denote $t_5=-\frac{\log N}{\log(1-p(1-\epsilon))}$. Let $T'_n$ be the number of steps from the start of the second phase until all nodes are informed.

\begin{lem} \label{cyclicstage3}
For $\delta>0$, there exists $m$ such that $\Pr(T'_n > t_5+m)<\delta$.
\end{lem}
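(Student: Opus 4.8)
The plan is to reduce the second phase to a question about long runs in a random pattern. Let $S$ be the set of active nodes (so $|S|=n$) and let $I\subseteq S$ be the set of \emph{sources}, i.e.\ the nodes informed when the second phase begins; note $0\in I$, and by the analysis of the first phase (Lemmas~\ref{stage1} and~\ref{stage2}) together with Lemma~\ref{imlemma}, the event $E:=\{\,|I|\ge (1-\epsilon)pN\,\}\cap\{\,n\le pN+N^{2/3}\,\}$ has probability tending to $1$. By definition of the algorithm only sources transmit in the second phase, and source $i$ has, after $m$ steps, sent to each of $i+1,\dots,i+m$ (all indices modulo $N$); hence an active node $j$ is informed within $m$ steps if and only if $j\in I$ or $I$ meets the block $\{j-m,\dots,j-1\}$. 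Consequently
\[
\{T'_n>m\}\ \subseteq\ \{\text{some block of } m \text{ consecutive nodes is disjoint from } I\},
\]
so it suffices to bound the probability of a long source-free block and then pick $m$.

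The only nontrivial input is the law of $I$, which is not an i.i.d.\ subset. Here I would use the label-symmetry of the naive algorithm: under any relabeling of the nodes that fixes $0$ and maps $S$ to itself the first phase has the same distribution, so conditionally on $S$ and on $K:=|I|$ the set $I$ is uniform among the $K$-subsets of $S$ containing $0$ (requiring $0\in I$ perturbs the estimates below by only $O(1/n)$ and makes every block containing $0$ automatically safe). Fix a block $R$ of $m$ consecutive nodes with $0\notin R$. A routine binomial-coefficient identity gives
\[
\Pr\!\big(R\cap I=\emptyset \,\big|\, S,K\big)=\binom{n-|R\cap S|}{K}\Big/\binom{n}{K}\ \le\ \Big(1-\tfrac{K}{n}\Big)^{|R\cap S|},
\]
and on $E$ one has $1-K/n\le q_N:=1-\frac{(1-\epsilon)pN}{pN+N^{2/3}}$. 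Since $0\notin R$ we have $|R\cap S|\sim\mathrm{Bin}(m,p)$, so averaging the bound over $S$ and bounding the indicator of $E$ by $1$ yields
\[
\Pr\big(R\cap I=\emptyset,\ E\big)\ \le\ \E\!\left[q_N^{\,|R\cap S|}\right]=\big(1-p(1-q_N)\big)^m .
\]

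Union-bounding over the (at most) $N$ blocks gives $\Pr(T'_n>m)\le \Pr(E^c)+N(1-p(1-q_N))^m$. Take $m=t_5+m_0$ with $t_5=-\log N/\log(1-p(1-\epsilon))$; then $N(1-p(1-q_N))^{t_5}=\exp\!\big(\log N\cdot[\,1-\tfrac{\log(1-p(1-q_N))}{\log(1-p(1-\epsilon))}\,]\big)$, and since $q_N=\epsilon+O(N^{-1/3})$ and $x\mapsto\log(1-p(1-x))$ is Lipschitz near $\epsilon$, the bracketed quantity is $O(N^{-1/3})$, so this prefactor is $1+o(1)$. Hence $\Pr(T'_n>t_5+m_0)\le \Pr(E^c)+(1+o(1))\big(1-p(1-\epsilon)+O(N^{-1/3})\big)^{m_0}$, which for fixed $m_0$ tends to $(1-p(1-\epsilon))^{m_0}$; choosing $m_0=m$ large enough that $(1-p(1-\epsilon))^{m}<\delta$ finishes the proof. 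The step to watch is exactly this last one: the guaranteed source density is only $\ge (1-\epsilon)p$, so $q_N$ sits slightly above the nominal value $\epsilon$, and one must verify that this $O(N^{-1/3})$ slack is absorbed by the $o(1)$ in the running time rather than ruining the union bound — which it is, since it costs only a factor $\exp(O(N^{-1/3}\log N))=1+o(1)$. Establishing the (near-)uniformity of $I$ given $(S,K)$ is the other point that needs a word of justification; the rest is a bare union bound.
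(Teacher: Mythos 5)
Your proof is correct and follows essentially the same route as the paper: permutation invariance of the first phase makes the set of sources exchangeable, the event $\{T'_n>m\}$ is contained in the event that some block of $m$ consecutive nodes misses the sources, and a hypergeometric-ratio bound plus a union bound over the $N$ blocks finishes the job. You merely carry out the exchangeability argument in two layers (uniformity of $I$ within $S$, then averaging over $S$) and track the $O(N^{-1/3})$ slack explicitly, where the paper bounds the same block probability in one step by $\left(1-(1-\epsilon)p\right)^k$.
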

\begin{proof}
Observe that at the beginning of the second phase, there are, with hight probability, at least $(1-\epsilon)p N$ informed nodes. Since the first phase is invariant under all permutations, the probability that any specific $k$ nodes are all uninformed is bounded by
$$\frac{{N-k \choose (1-\epsilon)p N}}{{N \choose (1-\epsilon)p N}}\le \left(1-(1-\epsilon)p\right)^k \ .$$

Hence, if we take $k= \frac{\log \delta -\log N}{\log (1-p(1-\epsilon))}$ and using union bound, we get that the probability that there are $k$ consecutive (in the cyclic order) uninformed nodes is at most $\delta$. In other words, the probability of the second phase not concluding after $k$ steps is at most $\delta$. Taking $m=\frac{\log \delta}{\log (1-p(1-\epsilon))}$ we get the desired result.
\end{proof}

We can now prove Theorem \ref{cyclic}.
\begin{proof}[Proof of Theorem \ref{cyclic}]
As noted at the beginning of the section, by the end of the first phase, which takes $(1+o(1))\frac{\log N}{\log(1+p)}$ time steps, with high probability there are $(1-\epsilon)pN$ informed nodes, for some $\epsilon>0$. By lemma \ref{cyclicstage3}, for any $\delta>0$, there is some $m$ such that the probability of having uninformed nodes at time $-\frac{\log N}{\log(1-p(1-\epsilon))}+m$ is at most $\delta$. Taking $N\to \infty$, and $\epsilon , \delta \to 0$, slowly enough such that $m=o(\log N)$ yields the desired result.
\end{proof}

The next theorem show that this algorithm is indeed an improvement over the naive algorithm.

\begin{thm}
For every $0<p<1$, the cyclic algorithm's running time is asymptotically better than the naive algorithm's running time.
\end{thm}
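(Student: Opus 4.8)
The plan is to reduce the statement to a single elementary inequality. By Theorems \ref{naive} and \ref{cyclic}, with high probability the naive algorithm finishes in $(1+o(1)) C_{\mathrm{naive}}(p)\log N$ steps and the cyclic algorithm in $(1+o(1)) C_{\mathrm{cyclic}}(p)\log N$ steps, where
$$C_{\mathrm{naive}}(p)=\frac{1}{\log(1+p)}+\frac1p,\qquad C_{\mathrm{cyclic}}(p)=\frac{1}{\log(1+p)}+\frac{1}{-\log(1-p)}\ .$$
Both coefficients are positive constants (for fixed $0<p<1$), so "asymptotically better" amounts to the inequality $C_{\mathrm{cyclic}}(p)<C_{\mathrm{naive}}(p)$; indeed this gives, with high probability, a running time ratio bounded away from $1$. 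Since the two coefficients share the common summand $\frac{1}{\log(1+p)}$ (coming from the first phase, resp.\ the first two stages, which are identical for the two algorithms), it suffices to show
$$\frac{1}{-\log(1-p)}<\frac1p\ .$$

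First I would observe that for $0<p<1$ both denominators $-\log(1-p)$ and $p$ are strictly positive, so the displayed inequality is equivalent to $-\log(1-p)>p$, i.e.\ to $\log(1-p)<-p$, i.e.\ to
$$1-p<e^{-p}\ .$$
This is the strict form of the standard exponential bound $1+x\le e^{x}$, with equality only at $x=0$: applying it at $x=-p$ and noting $-p\neq 0$ for $p\in(0,1)$ gives $1-p<e^{-p}$, as required. Substituting back, $-\log(1-p)>p$, hence $\frac{1}{-\log(1-p)}<\frac1p$ and therefore $C_{\mathrm{cyclic}}(p)<C_{\mathrm{naive}}(p)$ for every $0<p<1$.

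There is no real obstacle here; the only point worth being careful about is the \emph{strictness} of the exponential inequality (so that we genuinely get "asymptotically better" and not merely "asymptotically no worse"), which is why I spell out that $x=-p$ is nonzero throughout the open interval $(0,1)$. If one wants a quantitative statement, the same computation shows the ratio $C_{\mathrm{cyclic}}(p)/C_{\mathrm{naive}}(p)$ is a continuous function of $p$ strictly less than $1$ on $(0,1)$, and one could additionally record its behaviour as $p\to 0^+$ and $p\to 1^-$ (using $-\log(1-p)=p+p^2/2+O(p^3)$ near $0$, so the gap in the second term is of order $p/2$ there, while the first terms blow up like $1/p$, so the ratio tends to $1$; near $p=1$ the cyclic term stays bounded while the naive term is $1$, giving a bounded-away-from-$1$ ratio). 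But none of this is needed for the stated theorem.
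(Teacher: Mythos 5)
Your proof is correct and follows essentially the same route as the paper: both reduce the claim to the single inequality $\frac{1}{-\log(1-p)}<\frac{1}{p}$, i.e.\ $p+\log(1-p)<0$ on $(0,1)$. The only (immaterial) difference is that the paper verifies this by noting $f(p)=p+\log(1-p)$ satisfies $f(0)=0$ with $f'<0$, whereas you exponentiate and invoke the strict form of $1+x\le e^x$.
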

\begin{proof}
We need to show that $\frac{1}{p}>\frac{1}{-log(1-p)}$ for $0<p<1$, which happens if and only if $p+\log(1-p)<0$. Define
$$f(p)=p+\log(1-p) \ .$$
Then $f(0)=0$ and the derivative
$$f'(p)=1-\frac{1}{1-p}$$
is negative for any $0<p<1$, so $f(p)<0$ for any $0<p<1$.
\end{proof}

\section{The Improved Cyclic Algorithm}

As noted earlier, the first phase of the cyclic algorithm is optimal. The improved cyclic algorithm will have the same first phase and a second phase, described below, which will take only $o(\log(N))$ steps to inform all nodes. Notice that at the end of the first phase, there are, with high probability, less then $(1-(1-\epsilon)p)$ uninformed nodes and they are invariant under all permutations of the nodes. Consider a model where, each active node is informed with probability $(1-2\epsilon)p$, independently. Call this the independent model. Comparing to the situation at the end of the first phase, in the independent model there will be more uninformed nodes, with high probability, and they are also invariant under all permutations. Since our second phase is going to be monotone, that is, changing a node from uninformed to informed will not cause the second phase to fail, it is enough to prove that it works with high probability under the independent model.

Let $\ell(N)=\sqrt{\log N}$ (in fact, any function which is $o(\log N)$ and $\omega(1)$ will do here). Group the nodes $1,2,\ldots,N$ into $N/\ell$ nonoverlapping segments of $\ell$ contiguous nodes each. Define a segment to be \emph{good} if it has at least $\ell (1-2\epsilon)p/2$ informed nodes at the end of the first phase, and \emph{bad} otherwise. Let $q(N)$ be the probability that a segment is good.

\begin{lemma} \label{q}
Under the independent model, $q(N)\to 1$ as $N\to\infty$.
\end{lemma}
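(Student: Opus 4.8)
The plan is to show that the number of informed nodes in a fixed segment of length $\ell = \ell(N)$ concentrates around its mean $\ell(1-2\epsilon)p$, which is well above the threshold $\ell(1-2\epsilon)p/2$, and to do this with enough room to spare that even after taking $N\to\infty$ the failure probability per segment vanishes. Concretely, fix a segment $S$ and let $X = \sum_{i\in S}\ind_i$ where $\ind_i$ is the indicator that node $i$ is informed in the independent model; then $X\sim \mathrm{Bin}(\ell,(1-2\epsilon)p)$, so $\E[X] = \ell(1-2\epsilon)p$. A segment is bad exactly when $X < \ell(1-2\epsilon)p/2$, i.e. when $X$ deviates below its mean by a multiplicative factor of $1/2$. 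First I would invoke a Chernoff-type bound for the lower tail of a binomial: $\Pr(X < \E[X]/2) \le e^{-c\,\E[X]}$ for a constant $c = c(p,\epsilon)>0$ (one may take $c = (1-2\epsilon)p/8$ from the standard bound $\Pr(X<(1-\eta)\mu)\le e^{-\eta^2\mu/2}$ with $\eta = 1/2$).

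Since $\E[X] = \ell(1-2\epsilon)p \to\infty$ as $N\to\infty$ (because $\ell(N) = \sqrt{\log N}\to\infty$), this bound gives $1 - q(N) = \Pr(S\text{ is bad}) \le e^{-c\ell(1-2\epsilon)p}\to 0$, hence $q(N)\to 1$. That is the entire argument; there is no union bound needed here, since the lemma only concerns a single segment.

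I do not expect a genuine obstacle: the only point requiring a little care is making sure the independent model is the right object to analyze (which the paragraph preceding the lemma has already justified, via the monotonicity of the second phase and the permutation-invariance comparison) and making sure the chosen $\ell(N) = \omega(1)$ is what drives $\E[X]\to\infty$. If one wanted to avoid quoting Chernoff, the same conclusion follows from Chebyshev's inequality exactly as in Lemma \ref{imlemma}: $\Var(X) = \ell(1-2\epsilon)p(1-(1-2\epsilon)p) \le \E[X]$, so $\Pr(|X-\E[X]|\ge \E[X]/2) \le 4\Var(X)/\E[X]^2 \le 4/\E[X] = 4/(\ell(1-2\epsilon)p)\to 0$, which already suffices for $q(N)\to 1$ and keeps the paper self-contained. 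I would present the Chebyshev version for uniformity with the rest of the paper, noting that the stronger Chernoff bound gives an exponentially small failure probability per segment, which is what the subsequent union bound over all $N/\ell$ segments in the analysis of the second phase will need.
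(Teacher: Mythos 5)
Your proposal is correct, and your Chebyshev fallback is essentially the paper's own proof: the paper writes $X\sim \mathrm{Bin}(\ell,p')$ with $p'=(1-2\epsilon)p$, notes $\E[X]=\ell p'$ and $Var(X)=\ell p'(1-p')$, and applies Chebyshev to get $1-q(N) = \P(X<\ell p'/2) = O(1/(\ell p'))\to 0$, exactly as in your last paragraph. Your primary route via the Chernoff lower-tail bound is a genuine strengthening, giving $1-q(N)\le e^{-c\ell p'}$ instead of a polynomial-in-$\ell$ decay; but one small correction to your closing remark: the union bound in Lemma \ref{second_phase} does not actually need this exponential decay. There the paper bounds the probability of $k$ consecutive bad segments by $N(1-q)^k$ and chooses $k = 2\log N/\log(1/(1-q))$, which is $o(\log N)$ as soon as $1-q\to 0$ at any rate (with the Chebyshev bound one gets $k = O(\log N/\log\log N)$), so the weaker estimate already suffices for the rest of the argument. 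The Chernoff version merely yields a shorter longest bad run, $k=O(\log N/\sqrt{\log N})$, which is a quantitative but not a logical improvement.
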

\begin{proof}
Under the independent model, $X$, the number of informed nodes in a segment is Binomial with parameters $\ell$ and $p'=(1-2\epsilon)p)$. The expectation is $\E[X]=\ell p'$ and the variance is $Var(X)=\ell p' (1-p')$. By Chebyshev inequality we have
$$\P(X<\frac{\ell p'}{2}\le \P(|X-\ell p'|>\frac{\ell p'}{2}) \le \frac{\ell p' (1-p')}{(\ell p')^2} < \frac{1}{\ell p'}$$
and this bound tends to 0, since $\ell\to \infty$.
\end{proof}

At the beginning of the second phase, each informed node informs the other nodes in its segment. This takes at most $\ell$ (which is $o(\log N)$) steps, after which each node in the segment knows which other nodes are informed and which are not and specifically, whether the segment is good or bad. If the segment is bad, then the nodes stop sending messages. If the segment is good, then all informed nodes begin to inform the node of the next segment (in the cyclic order), making sure that different nodes informs different nodes in the next segment. This takes only a constant number of steps, specifically, at most $\ell/(k/2)\le \frac{3}{p}$ steps. After that, they start informing the next segment, and so on.

\begin{lemma} \label{second_phase}
With high probability $T_n-T_{(1-\epsilon)pN}$ is $o(\log N)$.
\end{lemma}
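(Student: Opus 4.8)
The plan is to show that the second phase, run from the configuration left by the first phase, informs every active node within $o(\log N)$ further steps with high probability; since the first phase lasts $(1+o(1))\frac{\log N}{\log(1+p)}$ steps, combining this with the (already established) length of the first phase yields the lemma. As explained in the paragraph preceding the statement, monotonicity of the second phase lets us replace the end-of-first-phase configuration by the \emph{independent model}, in which each node is informed independently with probability $(1-2\epsilon)p$. The point of the reduction is that now the events ``segment $s$ is good'' are mutually independent, each with probability $q=q(N)\to 1$ by Lemma \ref{q}.

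Next I would record the mechanics of the second phase. The intra-segment round takes $\ell=\sqrt{\log N}=o(\log N)$ steps, and after it every segment that started with at least one informed node -- in particular every good segment, since $(1-2\epsilon)p\ell/2\to\infty$ -- has all of its active nodes informed, while every good segment still carries at least $(1-2\epsilon)p\ell/2$ informed nodes. Using distinct targets, such a segment informs the whole of the next segment in a constant number $C=C(p,\epsilon)$ of steps and then moves on, so the ``wave'' emanating from a good segment advances one segment per at most $C$ steps. The only segments still containing an uninformed active node after the intra-segment round are those that had \emph{zero} informed nodes at the end of the first phase; each of these lies in a maximal cyclic block of bad segments, which is immediately preceded by a good segment -- unless every segment is bad, an event of probability $(1-q)^{N/\ell}\to 0$.

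The crux is therefore to control $B$, the length of the longest cyclic block of consecutive bad segments. Since badness is independent across the $N/\ell$ segments, a union bound over windows gives $\P(B\ge L)\le (N/\ell)(1-q)^{L}\le N(1-q)^{L}$. By Lemma \ref{q}, $1-q=O(1/\sqrt{\log N})$, so $-\log(1-q)\ge(\tfrac12+o(1))\log\log N\to\infty$; choosing $L=\lceil 2\log N/(-\log(1-q))\rceil$ makes the bound at most $N\cdot N^{-2}=N^{-1}\to 0$, and this $L$ is $O(\log N/\log\log N)=o(\log N)$. Hence with high probability $B=o(\log N)$, every bad block is cleared by the wave from the good segment preceding it within $\ell+CB$ steps, and so all active nodes are informed within $\ell+CB=o(\log N)$ steps of the end of the first phase.

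I expect the genuine subtlety to be the bookkeeping of the second paragraph: verifying that the many simultaneous waves only assist one another (each good segment informs itself during the intra-segment round and then runs its own wave forward, so overlapping waves do no harm) and that the timing genuinely delivers every maximal bad block before step $\ell+CB$. The longest-run estimate of the third paragraph is, conceptually, where the algorithm's design is used, but it is a routine first-moment bound once the independence from the reduction is in hand; the one thing to watch is that $q=q(N)$ is not bounded away from $1$ by a constant, so the affordable block length is only $\Theta(\log N/\log\log N)$ rather than $O(1)$ -- which is precisely why the segment length $\ell$ is chosen to grow with $N$.
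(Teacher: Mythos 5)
Your proposal is correct and follows essentially the same route as the paper: reduce to the independent model, absorb the intra-segment round into an $\ell=o(\log N)$ term, and bound the longest cyclic run of bad segments by the union bound $N(1-q)^{L}$ with $L=\Theta(\log N/\log(1/(1-q)))=o(\log N)$ since $q\to 1$ by Lemma \ref{q}. Your version is slightly more quantitative (extracting $1-q=O(1/\sqrt{\log N})$ to get $L=O(\log N/\log\log N)$, and noting the degenerate all-bad event), but the argument is the same.
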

\begin{proof}
The first part of the second phase takes $\ell$ steps which is $o(\log N)$.
After that, the time it takes to finish the second phase is at most $\frac{3}{p}$ times the length of the longest sequence of bad segments. Under the independent model, the probability that there are $k$ consecutive bad segments at a specific location is $(1-q)^k$. Using union bound, the probability of getting such a sequence anywhere is at most $N(1-q)^k$. Plugging in $k=\frac{2\log N}{\log (\frac{1}{1-q})}$ we get that this probability is bounded by $\frac{1}{N}$. By Lemma \ref{q}, $q\to 1$, so our choice of $k$ is $o(\log N)$.
\end{proof}
\begin{proof}[Proof of Theorem \ref{improved}]
By Lemmas \ref{stage1} and \ref{stage2}, the first phase takes $(1+o(1))\frac{\log N}{\log(1+p)}$, with high probability. By Lemma \ref{second_phase}, the second phase takes $o(\log N)$.
\end{proof}

By Theorem \ref{optimal}, this is asymptotically optimal.

\section*{Acknowledgments}
We thank Amnon Barak for introducing us to the subject and to the cyclic algorithm.

\end{document}